\theoremstyle{plain}
\newtheorem{lemma}{Lemma}[section]
\newtheorem{theorem}[lemma]{Theorem}
\newtheorem{corollary}[lemma]{Corollary}
\newtheorem*{sclaim}{Claim}
\newtheorem*{stat}{\name}
\newcommand{\name}{testing}
\theoremstyle{definition}
\newtheorem{definition}[lemma]{Definition}
\theoremstyle{remark}
\newtheorem{remark}[lemma]{Remark}
\newtheorem*{remark*}{Remark}
\newcommand{\qedc}{{\qed}~{\rm Claim~{\theclaim}.}}
\newcommand{\qedsc}{{\qed}~{\rm Claim.}}
\newenvironment{scproof}
{\begin{proof}[Proof of Claim.]}
{\qedsc\renewcommand{\qed}{}\end{proof}}
\numberwithin{equation}{section}
\newcommand{\set}[1]{\{#1\}}
\newcommand{\setm}[2]{\set{#1\mid#2}}
\newcommand{\genp}[1]{\langle{#1}\rangle_+}
\newcommand{\cA}{\mathcal{A}}
\DeclareMathOperator{\length}{lh}
\DeclareMathOperator{\card}{card}
\DeclareMathOperator{\id}{id}
\DeclareMathOperator{\End}{End}
\begin{document}

\title{The finiteness problem for automaton semigroups is undecidable}

\author[P.~Gillibert]{Pierre Gillibert}
\address{Laboratoire d'Informatique Algorithmique: Fondements et Applications,
CNRS UMR 7089, Universit\'e Paris Diderot - Paris 7, Case 7014,
75205 Paris Cedex 13}
\email{pgillibert@yahoo.fr}
\email{gillibert@liafa.univ-paris-diderot.fr}

\keywords{Mealy automaton; automaton semigroup; Wang tiling; finiteness problem.}
\subjclass[2010]{
20E08, 
20F10, 
20M35} 

\date{\today}

\begin{abstract}
The finiteness problem for automaton groups and semigroups has been widely studied, several partial positive results are known. However we prove that, in the most general case, the problem is undecidable.

We study the case of automaton semigroups. Given a NW-deterministic Wang tile set, we construct a Mealy automaton, such that the plane admits a valid Wang tiling if and only if the Mealy automaton generates a infinite semigroup. The construction is similar to a construction by Kari for proving that the nilpotency problem for cellular automata is unsolvable.

Moreover Kari proves that the tiling of the plane is undecidable for NW-deterministic Wang tile set. It follows that the finiteness problem for automaton semigroups is undecidable.
\end{abstract}
\maketitle

\section{Introduction}

Automaton groups, where first introduced by Glu\v skov in \cite{ATA}. This family of groups is a powerful tool to build examples or counter-examples to various problems in group theory. Ale\v sin in \cite{FABPPG} constructs a new counter-example to the unbounded Burnside problem. Grigorchuk gave in \cite{OBPPG} an infinite $2$-group $G$ generated by three involutions, giving another counter-example to the unbounded Burnside problem. Grigorchuk solves the Milnor problem in \cite{DGFGG,OMPG}, proving that $G$ is of intermediate growth (its growth is neither polynomial nor exponential). Grigorchuk also proved in \cite{G:Day} that $G$ is amenable but not elementary amenable, giving the first counterexample to the Day problem \cite{Day}. Sushchansky, Gupta, and Sidki gave in \cite{Sus,GS} examples of infinite $p$-groups generated by two elements, for each prime $p>2$. Wilson in \cite{EGUEGG} answers a question by Gromov, constructing an example of group with exponential growth but without uniform exponential growth. Grigorchuk and \.Zuk proved in \cite{GZ} that the lamplighter group is an automaton group generated by a $2$-state automaton. Further study with Linnell and Schick in \cite{GLSZ} led to a counterexample to the strong Atiyah conjecture.

An automaton group is generated by the states of a finite Mealy automaton. Therefore it is natural to ask which classical group-theoretical questions are decidable.

For example, the word problem is decidable. There is an algorithm which, given an automaton group (or automaton semigroup) and two words in the generators, decides whether or not the words represent the same element. On the other hand \v Suni\'c and Ventura construct in \cite{CPAGIS} examples of automaton groups in which the conjugacy problem is not solvable.

We refer to \cite[Section~7]{ADSG}, for a list of several decision problems on automaton semigroups. The finiteness problem has been widely studied, several partial positive results are known. For example, Klimann proves in \cite{K12} that the finiteness problem is solvable among invertible-reversible Mealy automata with two states (or two letters). Mintz solved the finiteness problem for Cayley (automaton) semigroup in \cite{M}, let $S$ be a finite semigroup, the Cayley semigroup of $S$ is finite if and only if $S$ is aperiodic. There is a summary of other positive results in \cite{FPAG}.

In this paper we prove that the finiteness problem for automaton semigroups is not solvable.

The proof relies on a construction by Kari in \cite{TNPCA}. Kari constructs, given a NW-deterministic tile set $T$, a cellular automaton $C_T$, such that the plane has valid tiling in $T$ if and only if $C_T$ is not nilpotent. Kari also proves that the tiling problem for NW-deterministic tile set is unsolvable, hence the nilpotency problem for cellular automata is undecidable.

Since cellular automata are similar to Mealy automata, Kari's construction in \cite{TNPCA} can be adapted to Mealy automata. Given a NW-deterministic tile set $T$ we construct a Mealy automaton $\cA_T$ such that the plane has valid tiling in $T$ if and only if the semigroup $\genp{\cA_T}$ generated by $\cA_T$ is infinite, hence the finiteness problem for automaton semigroups is also undecidable.

The problem is still open for automaton groups. Although the methods of Lecerf in \cite{MTR}, the result of Kari and Ollinger in \cite{PIRC}, proving that periodicity is undecidable for cellular automata, suggest that the finiteness problem is also undecidable for automaton groups. The methods of \cite{M,SS} might also be useful.

\section{Basic concepts}

We denote $\mathbb{N}=\set{0,1,2,\dots}$ the set of all nonnegative integers.

Given a set $X$ and $n\in\mathbb{N}$, we denote by $X^n$ the set of all words of length $n$ over~$X$, that is the set of all sequences $u=(x_1,\dots,x_n)$ with entries in $X$, we set $\length u=n$. The only word of length 0, or equivalently the empty word, is denoted by $\varepsilon$. We also denote by $X^\omega$ the set of all infinite sequences $(x_k)_{k\in\mathbb{N}}$ with entries in $X$ and by $X^*$ the set of all finite words, that is:
\begin{equation*}
X^*=\bigcup_{n<\omega} X^n\,.
\end{equation*}
Furthermore we set:
\begin{align*}
X^{\le n}&=\setm{u\in X^*}{\length{u}\le n}=\bigcup_{k\le n} X^k\,.\\
X^{<   n}&=\setm{u\in X^*}{\length{u}<   n}=\bigcup_{k <  n} X^k\,.
\end{align*}
Given $u\in X^*$ and $v\in X^*\cup X^\omega$, we denote by $uv$ the concatenation of the words $u$ and $v$. Given $x\in X$ we denote by $x^n$ the constant sequence of length $n$ which takes the value $x$ for all indices, and by $x^\omega=(x)_{k\in\mathbb{N}}$ the infinite constant sequence.

A Mealy automaton $\cA$ is a $4$-tuple $(A,\Sigma,\delta,\sigma)$ where $A$ and $\Sigma$ are finite sets, $\delta\colon A\times\Sigma\to A$ and $\sigma\colon A\times\Sigma\to\Sigma$ are maps, called the \emph{transition} and the \emph{output} maps.

We extend the maps $\sigma\colon A^*\times\Sigma^{\le\omega}\to\Sigma^{\le\omega}$ and $\delta\colon A^{\le\omega}\times \Sigma^*\to A^{\le\omega}$ in the usual way. We also denote
$\sigma_a(u)=\sigma(a,u)$ and $\delta_u(a)=\delta(a,u)$, for all $a\in A^*$ and all $u\in\Sigma^*$. The equalities \eqref{E:aut1}-\eqref{E:aut4} are satisfied, indeed these equalities define the extensions of the maps $\delta$ and $\sigma$.
\begin{equation}\label{E:aut1}
\sigma_a(uv)=\sigma_a(u)\sigma_{\delta_u(a)}(v)\,,\quad\text{for all $a\in A^*$, $u\in\Sigma^*$, and $v\in\Sigma^*\cup\Sigma^{\omega}$.} 
\end{equation}
\begin{equation}\label{E:aut2}
\delta_u(ab)=\delta_u(a)\delta_{\sigma_a(u)}(b)\,,\quad\text{for all $u\in \Sigma^*$, $a\in A^*$, and $b\in A^*\cup A^{\omega}$.} 
\end{equation}
\begin{equation}\label{E:aut3}
\sigma_{ab}=\sigma_b\circ\sigma_a\,,\quad\text{for all $a,b\in A^*$.} 
\end{equation}
\begin{equation}\label{E:aut4}
\delta_{uv}=\delta_v\circ\delta_u\,,\quad\text{for all $u,v\in \Sigma^*$.} 
\end{equation}
Note that, given $a\in A^*$, the map $\sigma_a$ preserves the length of each word $u\in\Sigma^{\le\omega}$, moreover if $u$ is a prefix of $v$, then $\sigma_a(u)$ is a prefix of $\sigma_a(v)$. That is $\sigma_a$ is an endomorphism of the tree $\Sigma^*$.

We denote by $\genp{\cA}$ the subsemigroup of $\End\Sigma^*$ generated by $\setm{\sigma_a}{a\in A}$, equivalently $\genp{\cA}=\setm{\sigma_a}{a\in A^*\setminus\set{\varepsilon}}$.

\section{Mealy automata from NW-determinisc tile sets}

The following definition is due to Wang \cite{PTPR}.

\begin{definition}
A \emph{Wang tile} is a tuple $t=(t_N,t_S,t_E,t_W)$, where $t_N, T_S, T_E$, and $T_W$ are elements of a set of colors, viewed as a square with colored edges. A \emph{tile set} is a finite set of Wang tiles. A \emph{Wang tiling} of a subset~$P$ of~$\mathbb{Z}^2$, with a tile set~$T$, is a map $t\colon P\to T$. We say that $t$ is \emph{valid} if, given $(x,y)\in\mathbb{Z}^2$, the following equalities hold
\begin{align*}
t(x,y)_N &=t(x,y+1)_S, &\text{if $(x,y)\in P$ and $(x,y+1)\in P$}.\\
t(x,y)_E &=t(x+1,y)_W, &\text{if $(x,y)\in P$ and $(x+1,y)\in P$}.
\end{align*}
\end{definition}

A simple compactness argument gives the following classical result.

\begin{theorem}\label{T:compactness}
Let $T$ be a tile set. The set $\mathbb{Z}^2$ has a valid Wang tiling if and only if each finite subset of $\mathbb{Z}^2$ has a valid Wang tiling.
\end{theorem}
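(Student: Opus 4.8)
The forward implication is immediate: if $t\colon\mathbb{Z}^2\to T$ is a valid Wang tiling, then for any finite subset $P\subseteq\mathbb{Z}^2$ the restriction $t|_P$ is a valid tiling of $P$, since the adjacency equalities defining validity are inherited by restriction. So the plan is to concentrate on the converse, assuming that every finite subset of $\mathbb{Z}^2$ admits a valid tiling and producing a valid tiling of the whole plane.

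For the converse I would argue by compactness. Equip $T^{\mathbb{Z}^2}$, the set of all maps $\mathbb{Z}^2\to T$, with the product topology obtained from the discrete topology on the finite set $T$; by Tychonoff's theorem this space is compact. For each $(x,y)\in\mathbb{Z}^2$ let $C^N_{x,y}$ denote the set of $f\in T^{\mathbb{Z}^2}$ with $f(x,y)_N=f(x,y+1)_S$, and let $C^E_{x,y}$ denote the set of $f$ with $f(x,y)_E=f(x+1,y)_W$. Each of these conditions depends on only two coordinates of $f$, so each $C^N_{x,y}$ and $C^E_{x,y}$ is a closed (indeed clopen) subset of $T^{\mathbb{Z}^2}$. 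A map is a valid tiling of $\mathbb{Z}^2$ exactly when it lies in the intersection of all these sets, so it suffices to show that the family $\setm{C^N_{x,y},\,C^E_{x,y}}{(x,y)\in\mathbb{Z}^2}$ has nonempty intersection.

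By compactness it is enough to verify the finite intersection property. Given finitely many of these constraints, let $P$ be the finite set of all positions $(x,y)$, $(x,y+1)$, $(x+1,y)$ occurring in them. By hypothesis $P$ admits a valid tiling $g\colon P\to T$; extend $g$ to a map $\bar g\colon\mathbb{Z}^2\to T$ by assigning an arbitrary tile of $T$ to each position outside $P$. Every constraint in the chosen finite subfamily involves a pair of positions both lying in $P$, on which $\bar g$ agrees with the valid tiling $g$, so $\bar g$ satisfies each of them. Hence the finite subfamily has nonempty intersection, and compactness yields an $f$ in the full intersection, which is the desired valid tiling of $\mathbb{Z}^2$.

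The argument is essentially bookkeeping, and the only point that needs care is the reduction in the last paragraph: one must choose the finite region $P$ large enough to contain both endpoints of every adjacency constraint under consideration, so that validity of $g$ on $P$ already forces all the selected constraints. An equivalent route avoiding Tychonoff is to exhaust $\mathbb{Z}^2$ by the squares $Q_n=\set{-n,\dots,n}^2$ and apply K\"onig's lemma to the finitely branching tree whose level-$n$ nodes are the valid tilings of $Q_n$, with an edge from a tiling of $Q_{n+1}$ to its restriction on $Q_n$; an infinite branch then assembles into a valid tiling of the plane.
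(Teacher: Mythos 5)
Your proposal is correct, and it takes exactly the approach the paper intends: the paper states this theorem without proof, calling it a classical result obtained by ``a simple compactness argument,'' which is precisely your Tychonoff/finite-intersection-property argument (with the K\"onig's lemma exhaustion by squares $Q_n$ as a standard equivalent variant). Both directions are handled correctly, including the key bookkeeping step of choosing the finite region $P$ to contain both endpoints of every selected adjacency constraint.
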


\begin{remark}
In particular, if $\mathbb{Z}^2$ has no valid Wang tiling, then there is the least integer $n\in\mathbb{N}$ such that $\set{0,1,\dots,n}^2$ has no valid Wang tiling.
\end{remark}

The existence of valid Wang tiling is hard to determine, as the following result of R. Berger in \cite{UDP} illustrates.

\begin{theorem}[Berger]\label{T:UTS}
It is undecidable whether or not a finite tile set has a valid Wang tiling for $\mathbb{Z}^2$.
\end{theorem}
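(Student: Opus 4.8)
The plan is to reduce the halting problem for Turing machines to the tiling problem, so that the undecidability of the former transfers to the latter. Given a Turing machine $M$, I would construct, in a uniformly computable way, a finite tile set $T_M$ with the property that $\mathbb{Z}^2$ admits a valid Wang tiling with $T_M$ if and only if $M$ does not halt on empty input. Since the (non\nobreakdash-)halting problem is undecidable, this yields the theorem.

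First I would encode computations as tilings. A horizontal row of tiles is meant to represent one tape configuration of $M$: each tile carries, as its colors, the content of one tape cell together with the head data (state and position) when the head is present. By choosing the north and south colors to record this information, the vertical matching condition $t(x,y)_N = t(x,y+1)_S$ can be made to force the configuration in row $y+1$ to be the one\nobreakdash-step successor of the configuration in row $y$ under the transition function of $M$. Designing tiles that realise exactly the transition rule and that propagate unchanged cells faithfully is routine local bookkeeping.

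The real difficulty is that this naive simulation is always satisfiable: the all\nobreakdash-blank tiling, with no head anywhere, tiles the whole plane irrespective of $M$. To make tileability detect halting, two things are needed simultaneously: computations must be forced to start from a genuine blank initial configuration, and arbitrarily long such computations must be forced to appear. This is supplied by an \emph{aperiodic backbone}, which is the technical heart of the proof (Berger's original self\nobreakdash-similar tiles, or Robinson's later simplification). The backbone is an auxiliary finite tile set whose every valid tiling of $\mathbb{Z}^2$ is forced into a hierarchical pattern of nested squares of unboundedly growing sizes, and which admits no periodic tiling at all. The distinguished corners of these squares act as seeds that initialise a blank\nobreakdash-tape computation, while a square of side $n$ hosts a simulation of $M$ for on the order of $n$ steps.

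I would then superimpose the computation on the backbone by forming a suitable product tile set $T_M$, arranging the tiles so that the halting state of $M$ has no legal continuation --- no tile carries it. If $M$ never halts, every computation zone can be filled consistently and the whole plane is tiled. If $M$ halts in, say, $k$ steps, then every zone of side at least $k$ is forced to exhibit the halting state, which cannot be tiled; hence some finite region of $\mathbb{Z}^2$ has no valid tiling, and by Theorem~\ref{T:compactness} the plane has none either. Thus $T_M$ tiles $\mathbb{Z}^2$ precisely when $M$ does not halt. The main obstacle, as noted, is forcing the aperiodic hierarchical backbone with finitely many local matching rules: a priori a tileable set might tile only periodically, in which case the problem would be decidable by a parallel search for a period and for an untileable finite patch, so defeating periodicity while retaining arbitrarily large completable computation zones is exactly what must be engineered.
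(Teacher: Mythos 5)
Note first what the paper itself does with this statement: nothing. It is imported verbatim as Berger's theorem with a citation to \cite{UDP}; no proof appears, and none is needed for the paper's purposes. So the only meaningful comparison is with the classical argument your sketch gestures at. Your strategy is indeed the standard one --- reduce non-halting to tileability by simulating a Turing machine row by row and superimposing the simulation on a hierarchy-forcing tile set --- and you correctly locate where the difficulty lies.

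But locating the difficulty is not the same as resolving it, and that is the genuine gap: what you call the ``aperiodic backbone'' is not an ingredient of the proof, it \emph{is} the proof. Invoking ``Berger's original self-similar tiles, or Robinson's later simplification'' to supply a finite tile set whose every plane tiling is forced into nested squares of unbounded size, with corners seeding blank-tape computations, is citing the very result to be established: all the other ingredients in your outline (transition-rule tiles, vertical matching encoding one computation step, no tile carrying the halting state, compactness) were already available to Wang \cite{PTPR} in 1961, and the problem stayed open until Berger precisely because the backbone is hard. Even granting the backbone, two of your steps need repair. First, in the hierarchical tilings the squares of different levels are nested and overlapping, so one must additionally arrange that computations running at different scales do not collide; this is a substantial layer of Robinson's construction, not ``routine local bookkeeping.'' Second, your closing inference --- ``every zone of side at least $k$ cannot be tiled, hence some finite region of $\mathbb{Z}^2$ has no valid tiling, hence by Theorem~\ref{T:compactness} the plane has none'' --- is backwards. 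The forcing of large computation squares is a property of tilings of the \emph{whole plane}; a $k\times k$ patch considered in isolation typically \emph{can} be tiled by off-grid fragments of the hierarchy, so the premise ``some finite region has no valid tiling'' does not follow from what you established. The correct conclusion is direct: a hypothetical valid tiling of $\mathbb{Z}^2$ would contain a computation square simulating $M$ for at least $k$ steps, hence would exhibit the halting state, which no tile carries --- a contradiction. No appeal to compactness is needed in this direction (and the direction of Theorem~\ref{T:compactness} you invoke is the trivial restriction direction anyway).
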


The following notion was introduced by Kari in \cite{TNPCA}.

\begin{definition}
A tile set $T$ is \emph{NW-deterministic} if each tile is determined by the north and west colors. That is $t_N=s_N$ and $t_W=s_W$ imply that $t=s$, for all $s,t\in T$.
\end{definition}

Theorem~\ref{T:UTS} is generalized by Kari in \cite{TNPCA}.

\begin{theorem}[Kari]\label{T:UTS2}
It is undecidable whether or not a finite NW-deterministic tile set has a valid Wang tiling for $\mathbb{Z}^2$.
\end{theorem}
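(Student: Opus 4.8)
The plan is to prove undecidability by reduction from the halting behaviour of Turing machines, refining the classical Turing-machine simulation that underlies Berger's Theorem~\ref{T:UTS} so that the tile set one produces also satisfies NW-determinism. Fix a deterministic Turing machine $M$, and build a tile set $T_M$ whose valid tilings of $\mathbb{Z}^2$ are exactly the space-time diagrams of (bi-infinite) computations of $M$. Here each horizontal row of tiles encodes a full configuration of $M$---the tape contents together with the position and internal state of the head---and time increases downward, so that passing from one row to the row below corresponds to one application of the transition function of $M$. Colors on the vertical edges carry the tape symbol stored in a cell, while colors on the horizontal edges carry the information needed to transport the head and its state to a neighbouring cell. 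Choosing the matching rules to mimic the transition function of $M$ forces every valid tiling of a horizontal strip to be a genuine computation segment of $M$.

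The second step is to arrange this encoding so that $T_M$ is NW-deterministic, i.e.\ so that the north and west colors of a tile determine it. With time increasing downward, the north color records what a given cell held one step earlier and whether the head sat there, while the west color records whether the head is entering this cell from the left and in which state. For a deterministic machine this is exactly the data needed to compute the cell's new symbol and the new signal it emits eastward: on the head-free background the new symbol equals the old one and no lateral signal is created, so those tiles are automatically NW-deterministic, and near the head the single outgoing transition of $M$ pins the tile down. The design principle is therefore to route every head-and-state signal along directions compatible with the NW-to-SE propagation that NW-determinism demands.

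The crux, and the step I expect to be the main obstacle, is reconciling the two-directional motion of the Turing head with the strictly one-directional NW-to-SE determinism constraint: a left-moving transition naturally emits information \emph{westward}, against the direction along which NW-determinism propagates. I would handle this as Kari does, by deforming the simulation so that signalling becomes consistent with a single diagonal time direction---for instance by shearing the space-time diagram, or by implementing left-moving requests through auxiliary tiles that carry them along determinism-respecting paths---so that each tile remains pinned down by its north and west neighbours alone. The background tiles must be checked to remain NW-deterministic under this deformation, which is where most of the care is needed.

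Finally, to obtain undecidability for the \emph{entire} plane rather than a quarter-plane anchored at an origin tile, I would reduce from the immortality problem for Turing machines, known to be undecidable (Hooper). A tiling of all of $\mathbb{Z}^2$ encodes a computation that proceeds for all time, so by the compactness principle of Theorem~\ref{T:compactness} it suffices to realize arbitrarily large finite windows, which one can extract from the orbit of an immortal configuration; conversely any full tiling exhibits an infinite forward orbit and hence an immortal configuration. Thus $\mathbb{Z}^2$ admits a valid tiling in $T_M$ if and only if $M$ has an immortal configuration. Since immortality is undecidable and the map $M\mapsto T_M$ is effective, it is undecidable whether a finite NW-deterministic tile set tiles $\mathbb{Z}^2$.
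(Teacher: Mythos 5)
First, a point of comparison: the paper does not prove Theorem~\ref{T:UTS2} at all. It is quoted as a known result of Kari and used as a black box, the proof residing in \cite{TNPCA}. So your proposal has to be judged against Kari's published argument, and there it diverges in a way that exposes a genuine gap.

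The gap is in your final step, the claimed equivalence ``$\mathbb{Z}^2$ admits a valid tiling in $T_M$ if and only if $M$ has an immortal configuration.'' With the space-time-diagram encoding you describe (tape symbols on vertical edges, head-and-state signals on horizontal edges), the tile set must contain \emph{background} tiles that simply copy a tape symbol downward, since tape cells away from the head do exactly that. But then the configuration in which every tile is a background tile --- no head anywhere, every row a head-free tape --- is a valid tiling of the whole plane, regardless of what $M$ does. Locality makes this unavoidable: no Wang-tile matching rule can force a head to appear \emph{somewhere} in an infinite row. Hence the ``tiling $\Rightarrow$ immortal configuration'' direction fails, and in fact your $T_M$ tiles the plane for every $M$, so the reduction decides nothing. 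This is precisely the difficulty that makes Berger's Theorem~\ref{T:UTS} (and its NW-deterministic refinement) nontrivial: the origin-constrained tiling problem reduces easily from halting, but the unconstrained problem requires forcing nontrivial structure into \emph{every} valid tiling, i.e.\ aperiodicity. Kari's actual proof in \cite{TNPCA} does exactly this: he constructs an NW-deterministic variant of Robinson's hierarchical aperiodic tile set, embeds Turing-machine computations in the unbounded family of squares that the hierarchy generates, and reduces from the halting problem. Your immortality idea can be salvaged, but only with an encoding that has no idle background at all (for instance Kari's later tile sets computing piecewise affine maps, in which every row carries numerical information and there is no ``blank'' tile); it cannot work with the head/tape encoding you propose. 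The shearing device for left-moving transitions addresses a real but secondary issue; the missing ingredient is the aperiodicity mechanism, and without it the argument collapses at the quoted biconditional.
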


The main goal was to generalize a result of Culik, Pachl, and Yu in \cite{OLSCA} to dimension one. Kari proves the following theorem in \cite{TNPCA}.

\begin{theorem}[Kari]
It is undecidable whether or not a one-dimensional cellular automaton is nilpotent.
\end{theorem}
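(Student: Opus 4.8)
The plan is to reduce the nilpotency problem to the tiling problem for NW-deterministic tile sets, which is undecidable by Theorem~\ref{T:UTS2}. Given a NW-deterministic tile set $T$, I would construct a one-dimensional cellular automaton $C_T$ with state set $T\cup\set{0}$ and quiescent state $0$, acting on $(T\cup\set{0})^{\mathbb{Z}}$, such that $C_T$ is nilpotent if and only if $\mathbb{Z}^2$ has no valid Wang tiling in $T$. Recall that $C_T$ is \emph{nilpotent} when some iterate of its global map $F$ collapses every configuration to the uniform quiescent configuration $0^{\mathbb{Z}}$; so once the equivalence is established, a decision procedure for nilpotency would decide the NW-deterministic tiling problem, contradicting Theorem~\ref{T:UTS2}.

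The heart of the construction is a shearing of the space-time diagram. Writing $c_t(x)$ for the value at cell $x$ after $t$ steps, I identify the space-time cell $(x,t)$ with the lattice point $(i,j)=(x,x-t)$, so that $t=i-j$. Under this identification the north neighbour $(i,j+1)$ and the west neighbour $(i-1,j)$ of a tile both lie one step earlier in time, namely at $(x,t-1)$ and $(x-1,t-1)$. Hence a radius-one rule suffices. Using NW-determinism, let $f(n,w)$ denote the unique tile of $T$ with north colour $n$ and west colour $w$, when it exists, and set
\[
c_t(x)=f\bigl(\,(c_{t-1}(x))_S,\ (c_{t-1}(x-1))_E\,\bigr)
\]
whenever $c_{t-1}(x),c_{t-1}(x-1)\in T$ and $f$ is defined on the indicated colours, and $c_t(x)=0$ otherwise. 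Then $0$ is quiescent, and, crucially, a cell carrying a tile of $T$ forces both of its predecessors to carry tiles of $T$: the set of non-quiescent cells is closed under passing to predecessors.

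For the easy direction, a valid tiling $\tau\colon\mathbb{Z}^2\to T$ transported through the identification gives the space-time diagram $c_t(x)=\tau(x,x-t)$, which obeys the rule and never uses $0$; thus $F^t(c_0)\neq 0^{\mathbb{Z}}$ for all $t$, so $C_T$ is not nilpotent. For the converse, suppose $C_T$ is not nilpotent, so for each $n$ some configuration has a non-quiescent cell at time $n$; after a translation I may place it at $(0,n)$. Back-propagation of non-quiescence fills the whole backward light-cone
\[
\Delta_n=\setm{(x,t)}{0\le t\le n,\ -(n-t)\le x\le 0}
\]
with tiles of $T$ obeying the adjacency rules. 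Transported to $\mathbb{Z}^2$, $\Delta_n$ becomes a validly tiled triangular region of the form $\setm{(i,j)}{-n\le j\le i\le 0}$, which contains an axis-aligned square of side about $n/2$. Thus arbitrarily large squares admit valid tilings, and Theorem~\ref{T:compactness} yields a valid tiling of all of $\mathbb{Z}^2$.

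The main obstacle is getting the construction exactly right: one must verify that the shearing genuinely converts the two directions of NW-determinism into a single one-dimensional radius-one rule, and that $0$ behaves as a quiescent state whose non-quiescent complement is backward-closed, so that non-nilpotency is equivalent to the existence of unboundedly tall $0$-free cones. The remaining bookkeeping — locating an axis-aligned square inside the sheared triangle and invoking Theorem~\ref{T:compactness} — is routine. Combining the two directions shows that $C_T$ is nilpotent if and only if $T$ does not tile the plane, which with Theorem~\ref{T:UTS2} gives the undecidability of nilpotency.
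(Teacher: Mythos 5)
Your proposal is correct and takes essentially the same approach as the source: the paper itself does not reprove this theorem (it cites Kari \cite{TNPCA}), and your sheared space-time construction---tile states plus a quiescent error state, backward propagation of non-quiescence along the light-cone, an axis-aligned square inside the sheared triangle, and compactness---is precisely Kari's reduction from the NW-deterministic tiling problem (Theorem~\ref{T:UTS2}). It is also the same scheme the paper adapts to Mealy automata in Definition~\ref{D:MAformDTS} and Lemmas~\ref{L:reducepart1} and~\ref{L:reducepart2}, with your two directions corresponding to those two lemmas.
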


The argument can be adapted to automaton semigroups, however we need to be careful about a side effect. A cellular automaton acts on words indexed by $\mathbb{Z}$, while each element of an automaton semigroup acts on words indexed by $\mathbb{N}$. We first define a Mealy automaton from a tile set (Kari uses a similar construction to obtain a cellular automaton).

\begin{definition}\label{D:MAformDTS}
Let $T$ be a finite NW-deterministic tile set, and let $\bot$ denote a special symbol which is not in T. The \emph{Mealy automaton of $T$} is the tuple $\cA_T=(A,\Sigma,\delta,\sigma)$, where $A=\Sigma=T\sqcup\set{\bot}$, and the maps $\delta$ and $\sigma$ are defined by
\begin{align*}
\delta\colon A\times\Sigma &\to A\\
(x,y)&\mapsto y
\end{align*}
The new state does not depend on the old one, the automaton only remembers the previous letter.
\begin{align*}
\sigma\colon A\times\Sigma&\to \Sigma\\
(\bot,s)&\mapsto \bot\\
(t,\bot)&\mapsto \bot\\
(\bot,\bot)&\mapsto \bot\\
(s,t)&\mapsto r &&\text{if $r_N=t_S$ and $r_W=s_E$.}\\
(s,t)&\mapsto \bot &&\text{otherwise.}
\end{align*}
That is, given $s,t,r\in T$, if the Wang tiling on the left hand side of Figure~\ref{F:AWT} is valid, then $\sigma(s,t)=r$, in all other cases $\sigma(s,t)=\bot$.
\end{definition}

\begin{remark}
The Mealy automaton of a finite NW-deterministic tile set $T$ should be understood in the following way. A word $w$ in $A$, can be seen as a word written on tiles along the diagonal $D$, the Mealy automaton transforms this word to the word written on the tiles along the diagonal right below the diagonal $D$. If it is impossible to put a tile at some place, then the ``mistake'' tile $\bot$ is placed instead.

The Mealy automaton $\cA_T$ is a reset automaton. Silva and Steinberg have studied groups and semigroups generated by invertible reset automata. In particular such group is infinite if and only if any generator is of infinite order (cf. \cite[Theorem~3.2]{SS}). A generalization of this paper to automaton groups would required to prove that this problem is also undecidable.
\end{remark}

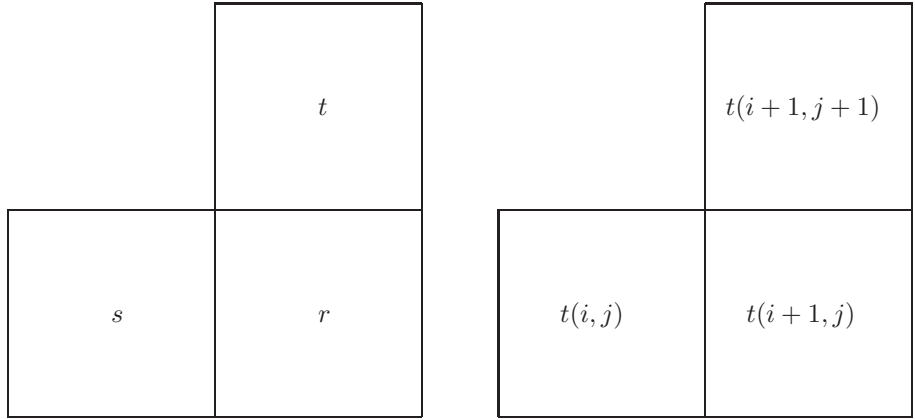
\begin{figure}
\setlength{\unitlength}{0.55mm}
\begin{picture}(110,110)(-5,-5)
\put(0,0){\line(1,0){100}}
\put(0,50){\line(1,0){100}}
\put(50,100){\line(1,0){50}}
\put(0,0){\line(0,1){50}}
\put(50,0){\line(0,1){100}}
\put(100,0){\line(0,1){100}}
\put(25,23){$s$}
\put(75,23){$r$}
\put(75,73){$t$}
\end{picture}
\quad
\begin{picture}(110,110)(-5,-5)
\put(0,0){\line(1,0){100}}
\put(0,50){\line(1,0){100}}
\put(50,100){\line(1,0){50}}
\put(0,0){\line(0,1){50}}
\put(50,0){\line(0,1){100}}
\put(100,0){\line(0,1){100}}
\put(15,23){$t(i,j)$}
\put(60,23){$t(i+1,j)$}
\put(55,73){$t(i+1,j+1)$}
\end{picture}
\caption{Wang tilings.}\label{F:AWT}
\end{figure}

\begin{remark}\label{R:SigmaAppliedToWord}
Note that $\delta_x(a)=\delta(a,x)=x$ for all $a\in A$ and $x\in\Sigma$. It follows that:
\begin{equation}\label{E:SigmaAppliedToWord}
\sigma_a(u)=\sigma_a(u_0) (\sigma_{u_k}(u_{k+1}))_{k\in\mathbb{N}},\quad\text{for all $u=(u_k)_{k\in\mathbb{N}}\in\Sigma^\omega$, and all $a\in A$}.
\end{equation}
\end{remark}

\begin{lemma}\label{L:reducepart1}
Let $T$ be a finite NW-deterministic tile set. Let $t\colon \mathbb{Z}^2\to T$ be a valid Wang tiling. Consider the word $w_n=(t(k+n,k))_{k\in\mathbb{N}}$ for each $n\in\mathbb{N}$. The equality $\sigma_\bot^m(w_n)=\bot^m w_{m+n}$ holds for all $n,m\in\mathbb{N}$. In particular all the maps $\sigma_\bot^m$ are different.
\end{lemma}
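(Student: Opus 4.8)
The plan is to prove the identity $\sigma_\bot^m(w_n) = \bot^m w_{m+n}$ by induction on $m$, with the case $m=0$ being trivial. The heart of the argument is the base step $m=1$, which I would establish directly from the formula \eqref{E:SigmaAppliedToWord} of Remark~\ref{R:SigmaAppliedToWord}. Writing $(w_n)_k = t(k+n,k)$, that formula gives
\[
\sigma_\bot(w_n) = \sigma_\bot\bigl((w_n)_0\bigr)\,\bigl(\sigma_{(w_n)_k}((w_n)_{k+1})\bigr)_{k\in\mathbb{N}}.
\]
The leading factor is $\sigma_\bot(t(n,0)) = \bot$, since $\sigma(\bot,s)=\bot$ for every $s\in T$.

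For the remaining factors I would put $s = t(k+n,k)$ and $u = t(k+n+1,k+1)$ and show that $\sigma(s,u) = t(k+n+1,k)$. The key point is to read off the two validity conditions of the tiling at the cells surrounding $t(k+n+1,k)$, which is exactly the local picture of Figure~\ref{F:AWT}: the vertical match between $(k+n+1,k)$ and $(k+n+1,k+1)$ gives $t(k+n+1,k)_N = u_S$, and the horizontal match between $(k+n,k)$ and $(k+n+1,k)$ gives $t(k+n+1,k)_W = s_E$. Thus $r = t(k+n+1,k)$ satisfies $r_N = u_S$ and $r_W = s_E$, and by \emph{NW-determinism} it is the unique such tile, so $\sigma(s,u) = r = (w_{n+1})_k$. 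Collecting the factors yields $\sigma_\bot(w_n) = \bot\, w_{n+1}$.

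For the inductive step I would apply $\sigma_\bot$ to $\bot^m w_{m+n}$ and invoke \eqref{E:SigmaAppliedToWord} once more. On the block of leading $\bot$'s each adjacent pair produces $\bot$ (as $\sigma(\bot,\bot)=\bot$), and the junction pair $\bigl(\bot,(w_{m+n})_0\bigr)$ also produces $\bot$ (as $\sigma(\bot,s)=\bot$), so exactly one extra $\bot$ appears; on the tail, the computation of the base step transforms $w_{m+n}$ into $w_{m+n+1}$. Hence $\sigma_\bot(\bot^m w_{m+n}) = \bot^{m+1} w_{m+n+1}$, which completes the induction. Finally, to see that the maps $\sigma_\bot^m$ are pairwise distinct, I would evaluate at $w_0$: since $t$ takes values in $T$, every entry of $w_m$ lies in $T\setminus\set{\bot}$, so $\sigma_\bot^m(w_0) = \bot^m w_m$ has exactly $m$ leading occurrences of $\bot$; distinct exponents therefore yield distinct images, hence distinct maps.

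The main obstacle is purely the careful identification $\sigma(s,u) = t(k+n+1,k)$: one must correctly translate the geometric adjacency conditions of the valid tiling into the north/west constraints defining $\sigma$, and then use NW-determinism to guarantee uniqueness of the output tile. The remaining work — the propagation of the $\bot$-prefix and the bookkeeping of the diagonal indices $k+n$ — is routine.
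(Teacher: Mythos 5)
Your proof is correct and follows essentially the same route as the paper: both establish $\sigma_\bot(w_n)=\bot\,w_{n+1}$ via formula \eqref{E:SigmaAppliedToWord} combined with the observation that validity of the tiling plus NW-determinism force $\sigma_{t(i,j)}(t(i+1,j+1))=t(i+1,j)$, and then conclude by induction on $m$. Your write-up merely makes explicit two points the paper leaves implicit, namely the propagation of the $\bot$-prefix in the inductive step and the evaluation at $w_0$ showing the maps $\sigma_\bot^m$ are pairwise distinct.
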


\begin{proof}
We use the notations of Definition~\ref{D:MAformDTS}. As the Wang tiling on the right hand side of Figure~\ref{F:AWT} is valid, it follows that
\begin{equation}\label{E:sigmavalidtiling}
\sigma_{t(i,j)}(t(i+1,j+1))=t(i+1,j),\quad\text{for all $i,j\in\mathbb{N}$}.
\end{equation}
Given $n\in\mathbb{N}$, the following equalities hold.
\begin{align*}
\sigma_\bot(w_n)&=\sigma_\bot(t(n,0))(\sigma_{t(n+k,k)}( t(n+k+1,k+1)))_{k\in\mathbb{N}}\,, &&\text{by \eqref{E:SigmaAppliedToWord}.}\\
&=\bot(t(n+k+1,k))_{k\in\mathbb{N}}\,, &&\text{by \eqref{E:sigmavalidtiling}.}\\
&=\bot w_{n+1}
\end{align*}
The result follows by induction.
\end{proof}

\begin{lemma}\label{L:reducepart2}
Let $T$ be a finite NW-deterministic tile set. If $\mathbb{Z}^2$ has no valid Wang tiling then $\genp{\cA_T}$ is finite.
\end{lemma}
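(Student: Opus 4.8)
The plan is to show that the set $\setm{\sigma_a}{a\in A^*\setminus\set\varepsilon}$ of generators of $\genp{\cA_T}$ contains only finitely many distinct endomorphisms. First I would record the \emph{triangular} shape of the computation of $\sigma_a$. Writing $a=a_1\cdots a_m$, by \eqref{E:aut3} we have $\sigma_a=\sigma_{a_m}\circ\cdots\circ\sigma_{a_1}$, so for an input $u\in\Sigma^\omega$ I set $v^{(0)}=u$ and $v^{(i)}=\sigma_{a_i}(v^{(i-1)})$, whence $\sigma_a(u)=v^{(m)}$. Remark~\ref{R:SigmaAppliedToWord} and \eqref{E:SigmaAppliedToWord} give $v^{(i)}_0=\sigma_{a_i}(v^{(i-1)}_0)$ and, crucially, $v^{(i)}_{k}=\sigma_{v^{(i-1)}_{k-1}}(v^{(i-1)}_{k})$ for $k\ge 1$. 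Thus each entry $v^{(i)}_k$ depends only on the two entries $v^{(i-1)}_{k-1}$, $v^{(i-1)}_{k}$ directly above and above-left, and hence, by induction, only on the input letters $u_0,\dots,u_k$ together with $a_1,\dots,a_m$; in particular $v^{(m)}_k$ never depends on input positions beyond $k$.

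Two features of $\sigma$ drive the argument. First, $\bot$ is absorbing and spreads: since $\sigma(\bot,s)=\sigma(t,\bot)=\sigma(\bot,\bot)=\bot$, the equality $v^{(i-1)}_{j}=\bot$ forces both $v^{(i)}_{j}=\bot$ and $v^{(i)}_{j+1}=\bot$. Iterating, any occurrence of $\bot$ among the entries lying in the ``up-triangle'' $\setm{(i',j')}{k-(m-i')\le j'\le k}$ above a cell $(m,k)$ propagates down to that cell, so $v^{(m)}_k=\bot$ unless this whole up-triangle is free of $\bot$. Second, a $\bot$-free triangle is exactly a valid partial tiling: when all entries are tiles, the defining relation $r_N=t_S$, $r_W=s_E$ holds at every step, so placing $v^{(i)}_k$ at the lattice point $(k,\,k-i)\in\mathbb{Z}^2$ yields a valid Wang tiling, the vertical neighbours realising $r_N=t_S$ and the horizontal neighbours realising $r_W=s_E$ (a short coordinate check). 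A $\bot$-free up-triangle of depth $d$ then fills a right isosceles triangle with legs $d$, which contains an axis-aligned square of side $\lfloor d/2\rfloor$, i.e.\ a valid tiling of a translate of $\set{0,1,\dots,\lfloor d/2\rfloor}^2$.

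Now I would invoke the hypothesis. By Theorem~\ref{T:compactness} and the following Remark there is a least $n_0\ge 1$ with $\set{0,1,\dots,n_0}^2$ admitting no valid tiling; put $M=2n_0$. If $m\ge M$ and $k\ge M$, the depth-$M$ up-triangle of $(m,k)$ lies entirely in rows $\ge 0$ and columns $\ge 0$, hence is a genuine sub-computation; were it $\bot$-free it would yield a valid tiling of $\set{0,\dots,n_0}^2$, which is impossible. So it contains a $\bot$, and propagation gives $v^{(m)}_k=\bot$. Consequently, for every $a$ with $\length a\ge M$ and every input, $\sigma_a(u)$ equals $\bot$ at all positions $k\ge M$, while each position $k<M$ depends only on $u_0,\dots,u_{M-1}$ (and on $a$). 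Thus $\sigma_a$ is completely determined by the induced map $\Sigma^{M}\to\Sigma^{M}$ on the first $M$ letters, of which there are at most $(\card\Sigma^{M})^{\card\Sigma^{M}}$. Together with the finitely many words of length $<M$, this shows $\genp{\cA_T}$ is finite.

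The routine part is the bookkeeping of the triangular recursion and of the $\bot$-propagation. The main obstacle is the step identifying a $\bot$-free triangle with a valid tiling and extracting a genuine square from it: one must pin down the embedding $(i,k)\mapsto(k,k-i)$, verify that the two Wang adjacency conditions are precisely the two cases of the relation defining $\sigma$, and check that the inscribed square is large enough (side $\lfloor d/2\rfloor\ge n_0$) to contradict the minimality of $n_0$. Getting the constant right, namely that $M=2n_0$ already suffices, is where care is needed; the rest is counting.
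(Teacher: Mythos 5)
Your proposal is correct and takes essentially the same route as the paper's own proof: your triangular recursion and $\bot$-propagation are the paper's equation \eqref{E:vwtlocal} read forwards and backwards, your diagonal embedding $(i,k)\mapsto(k,k-i)$ with an inscribed square is exactly the correspondence depicted in Figure~\ref{F:AWT2}, and your counting of induced maps on prefixes matches the paper's final bound. The only differences are bookkeeping: the paper proves the claim for words of length exactly $2n$ (truncating outputs to $\bot$ after position $n$) and extends to longer words by composing with $\sigma_w$, whereas you treat all lengths $\ge 2n_0$ uniformly and truncate after position $2n_0$.
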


\begin{proof}
We use the notations of Definition~\ref{D:MAformDTS}. By Theorem~\ref{T:compactness} there is $n\in\mathbb{N}$ such that the set $\set{0,1,\dots,n}^2$ has no valid Wang tiling for $T$. 

\begin{sclaim}
Let $u\in A^{2n}$. The following equality holds.
\[
\sigma_u(pq)=\sigma_u(p)\bot^{\omega}\,,\quad\text{for all $p\in\Sigma^n$ and all $q\in\Sigma^\omega$.}
\]
\end{sclaim}
\begin{scproof}
We can write $u=u_1\dots u_{2n}$. Set $\tau_0=\id$, and set:
\[
 \tau_k=\sigma_{u_1u_2\dots u_k}=\sigma_{u_k}\circ\sigma_{u_{k-1}}\circ\dots\circ\sigma_{u_{1}}\,,\quad\text{for each $1\le k \le 2n$.}
\]
Notice that
\begin{equation}\label{E:composetau}
\sigma_{u_{k+1}}\circ\tau_k=\tau_{k+1}\,,\quad\text{for all $0\le k \le 2n-1$.}
\end{equation}

Let $p\in\Sigma^n$, let $q\in\Sigma^\omega$. Denote by $f(i,j)$ the $j^{\text{th}}$ letter of $\tau_{i}(pq)$, for $(i,j)\in\mathbb{N}^2$ such that $i\le 2n$. That is:
\begin{equation}\label{E:taupq}
 \tau_{i}(pq)=f(i,j)_{j\in\mathbb{N}}\,,\quad\text{for all $0\le i\le 2n$.}
\end{equation}

Given $0\le i< 2n$, the following equalities hold:
\begin{align*}
f(i+1,j)_{j\in\mathbb{N}} &= \tau_{i+1}(pq)\,,&&\text{by \eqref{E:taupq}}\\
&=\sigma_{u_{i+1}}(\tau_i(pq))\,,&&\text{by \eqref{E:composetau}}\\
&=\sigma_{u_{i+1}}(f(i,j)_{j\in\mathbb{N}})\,,&&\text{by \eqref{E:taupq}}\\
&=\sigma_{u_{i+1}}(f(i,0)) (\sigma_{f(i,j)} (f(i,j+1)))_{j\in\mathbb{N}} \,,&&\text{by \eqref{E:SigmaAppliedToWord} in Remark~\ref{R:SigmaAppliedToWord}}
\end{align*}
Therefore the following statement holds
\begin{equation}\label{E:vwtlocal}
\sigma_{f(i,j)} (f(i,j+1))= f(i+1,j+1)\,,\quad\text{for all $(i,j)\in\mathbb{N}^2$ with $0\le i< 2n$.}
\end{equation}

Assume that $f(2n,n+k)\not=\bot$ for some $k\in\mathbb{N}$. Applying inductively \eqref{E:vwtlocal}, with Definition~\ref{D:MAformDTS} we obtain that $f(i+j,i+k)\not=\bot$ for all $0\le i,j \le n$, and the $n\times n$ Wang tiling on Figure~\ref{F:AWT2} is valid.

Therefore $\set{0,1,\dots,n}^2$ has a valid Wang tiling; a contradiction.
\end{scproof}

Let $u\in A^*$ be a word of length at least $2n$, let $v\in A^{2n}$ and $w\in A$ be such that $u=vw$. Let $p\in\Sigma^n$, let $q\in \Sigma^\omega$. We have
\[
\sigma_u(pq)=\sigma_{vw}(pq)=\sigma_w(\sigma_v(pq))=\sigma_w(\sigma_v(p)\bot^\omega)=\sigma_{vw}(p)\bot^\omega=\sigma_u(p)\bot^\omega\,.
\]
Therefore $\setm{\sigma_u}{u\in A^*\text{ and }\length{u}\ge 2n}$ is of cardinality at most $\card (\Sigma^n)^{(\Sigma^n)}$.

However $\genp{\cA_T}= \setm{\sigma_u}{u\in A^{<2n}}\cup \setm{\sigma_u}{u\in A^*\text{ and }\length{u}\ge 2n}$, therefore the following inequality holds 
\[
\card\genp{\cA_T}\le 1+\card A+\card A^2+\dots+\card A^{2n-1} + \card (\Sigma^n)^{(\Sigma^n)}\,.
\]
Hence $\genp{\cA_T}$ is finite.
\end{proof}

\begin{figure}
\setlength{\unitlength}{0.4mm}
\begin{picture}(270,270)(-5,-5)
\put(0,0){\line(0,1){100}}
\put(50,0){\line(0,1){100}}
\put(100,0){\line(0,1){100}}
\put(160,0){\line(0,1){100}}
\put(210,0){\line(0,1){100}}
\put(260,0){\line(0,1){100}}

\put(0,160){\line(0,1){100}}
\put(50,160){\line(0,1){100}}
\put(100,160){\line(0,1){100}}
\put(160,160){\line(0,1){100}}
\put(210,160){\line(0,1){100}}
\put(260,160){\line(0,1){100}}

\put(0,0){\line(1,0){100}}
\put(0,50){\line(1,0){100}}
\put(0,100){\line(1,0){100}}
\put(0,160){\line(1,0){100}}
\put(0,210){\line(1,0){100}}
\put(0,260){\line(1,0){100}}

\put(160,0){\line(1,0){100}}
\put(160,50){\line(1,0){100}}
\put(160,100){\line(1,0){100}}
\put(160,160){\line(1,0){100}}
\put(160,210){\line(1,0){100}}
\put(160,260){\line(1,0){100}}

\put(15,235){$\scriptscriptstyle{f(0,k)}$}
\put(58,235){$\scriptscriptstyle{f(1,k+1)}$}
\put(125,235){$\dots$}
\put(165,235){$\scriptscriptstyle{f(n-1,n+k-1)}$}
\put(220,235){$\scriptscriptstyle{f(n,n+k)}$}

\put(15,185){$\scriptscriptstyle{f(1,k)}$}
\put(58,185){$\scriptscriptstyle{f(2,k+1)}$}
\put(125,185){$\dots$}
\put(168,185){$\scriptscriptstyle{f(n,n+k-1)}$}
\put(218,185){$\scriptscriptstyle{f(n+1,n+k)}$}

\put(25,125){$\vdots$}
\put(75,125){$\vdots$}
\put(125,125){$\ddots$}
\put(185,125){$\vdots$}
\put(235,125){$\vdots$}

\put(12,75){$\scriptscriptstyle{f(n-1,k)}$}
\put(58,75){$\scriptscriptstyle{f(n,k+1)}$}
\put(125,75){$\dots$}
\put(163,75){$\scriptscriptstyle{f(2n-2,n+k-1)}$}
\put(215,75){$\scriptscriptstyle{f(2n-1,n+k)}$}

\put(15,25){$\scriptscriptstyle{f(n,k)}$}
\put(56,25){$\scriptscriptstyle{f(n+1,k+1)}$}
\put(125,25){$\dots$}
\put(163,25){$\scriptscriptstyle{f(2n-1,n+k-1)}$}
\put(220,25){$\scriptscriptstyle{f(2n,n+k)}$}

\end{picture}
\caption{A Wang tiling defined by an element of $\genp{\cA_T}$.}\label{F:AWT2}
\end{figure}

From Lemma \ref{L:reducepart1} and Lemma~\ref{L:reducepart2} we see that  the existence of a valid Wang tiling of $\mathbb{Z}^2$ is equivalent to the infiniteness of an explicit automaton semigroup. Therefore, from Theorem~\ref{T:UTS2} we deduce the following result.

\begin{theorem}
It is undecidable whether or not a given automaton semigroup is finite.
\end{theorem}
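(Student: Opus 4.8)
The plan is to combine the two structural lemmas already established with Kari's undecidability theorem via a straightforward reduction. The final statement asserts that no algorithm can decide finiteness of an arbitrary automaton semigroup. I would argue by contradiction: suppose such an algorithm exists, and show it would solve the tiling problem for NW-deterministic tile sets, contradicting Theorem~\ref{T:UTS2}.

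First I would fix a finite NW-deterministic tile set $T$ and pass to its Mealy automaton $\cA_T$ from Definition~\ref{D:MAformDTS}. This construction is explicit and computable: given the finite data of $T$, one writes down $A=\Sigma=T\sqcup\set{\bot}$ together with the finite tables for $\delta$ and $\sigma$. Hence $T\mapsto\cA_T$ is an effective map from finite NW-deterministic tile sets to automaton semigroups.

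The key step is to assemble the equivalence
\[
\text{$\mathbb{Z}^2$ has a valid Wang tiling in $T$}\iff\text{$\genp{\cA_T}$ is infinite.}
\]
The forward direction is exactly Lemma~\ref{L:reducepart1}: a valid tiling $t\colon\mathbb{Z}^2\to T$ produces the diagonal words $w_n$ with $\sigma_\bot^m(w_n)=\bot^m w_{m+n}$, so the maps $\sigma_\bot^m$ are pairwise distinct and $\genp{\cA_T}$ contains infinitely many distinct elements. The contrapositive of the reverse direction is Lemma~\ref{L:reducepart2}: if $\mathbb{Z}^2$ has no valid tiling, then $\genp{\cA_T}$ is finite. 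Taking the contrapositive of each lemma and combining yields the stated biconditional.

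With the equivalence in hand I would finish by reduction. If there were an algorithm deciding whether a given automaton semigroup is finite, then on input a finite NW-deterministic tile set $T$ we could compute $\cA_T$, run the hypothetical algorithm on $\genp{\cA_T}$, and thereby decide whether $\mathbb{Z}^2$ admits a valid Wang tiling in $T$ --- contradicting Theorem~\ref{T:UTS2}. I expect no serious obstacle here, since the real work lives in Lemmas~\ref{L:reducepart1} and~\ref{L:reducepart2}; the only point requiring a word of care is the \emph{effectivity} of the reduction, namely that $\cA_T$ together with its transition and output tables is genuinely produced by an algorithm from $T$, which is immediate from the finite and explicit nature of Definition~\ref{D:MAformDTS}.
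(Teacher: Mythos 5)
Your proposal is correct and follows exactly the paper's own argument: the paper likewise combines Lemma~\ref{L:reducepart1} (valid tiling $\Rightarrow$ the maps $\sigma_\bot^m$ are pairwise distinct, so $\genp{\cA_T}$ is infinite) with Lemma~\ref{L:reducepart2} (no valid tiling $\Rightarrow$ $\genp{\cA_T}$ is finite) to get the biconditional, and then invokes Theorem~\ref{T:UTS2} to conclude undecidability. Your added remark on the effectivity of $T\mapsto\cA_T$ is a point the paper leaves implicit, but it is immediate and does not constitute a different route.
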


From the proof of Lemma~\ref{L:reducepart2}, we see the following corollary.

\begin{corollary}
It is undecidable whether or not, given an automaton semigroup $A$ and $f,g\in A$, there exists $n$ such that $f^n=g$.
\end{corollary}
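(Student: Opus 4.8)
The plan is to reduce the non-tileability problem of Theorem~\ref{T:UTS2} to the relation ``\(\exists n\colon f^n=g\)'', taking \(f=\sigma_\bot\) and choosing \(g\) to be the \emph{zero} endomorphism \(z\in\End\Sigma^*\) that replaces every letter of its argument by \(\bot\); thus \(z(u)=\bot^\omega\) for every \(u\in\Sigma^\omega\). The two preceding lemmas already supply the dichotomy I need, once read through the element \(\sigma_\bot\). If \(\mathbb{Z}^2\) has no valid tiling, the proof of Lemma~\ref{L:reducepart2} produces \(n\) with \(\set{0,\dots,n}^2\) untileable; applying its Claim to the word \(u=\bot^{2n}\in A^{2n}\) gives \(\sigma_{\bot^{2n}}(pq)=\sigma_{\bot^{2n}}(p)\bot^\omega\), and since \(\sigma_\bot^{2n}\) forces the first \(2n\ge n\) letters to be \(\bot\) we get \(\sigma_{\bot^{2n}}(p)=\bot^n\). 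Hence \(\sigma_\bot^{2n}=z\), i.e.\ \(f^{2n}=g\). Conversely, if \(\mathbb{Z}^2\) has a valid tiling, Lemma~\ref{L:reducepart1} gives \(\sigma_\bot^m(w_0)=\bot^m w_m\), whose letter in position \(m\) is \(t(m,0)\in T\), not \(\bot\); hence \(\sigma_\bot^m\neq z\) for every \(m\ge1\) (and, as noted there, the powers are even pairwise distinct).

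The one genuine point is to exhibit \(z\) as an honest generator of an automaton semigroup without perturbing these lemmas. I would adjoin to \(\cA_T\) a single new \emph{state} \(e\) and \emph{no} new tape letter, so that \(\Sigma\) is unchanged: set \(\sigma(e,x)=\bot\) and \(\delta(e,x)=e\) for every \(x\in\Sigma\). Then \(e\) is absorbing and emits only \(\bot\), so \(\sigma_e=z\) exactly. Because \(\Sigma\) and all transitions out of the old states are untouched, each \(\sigma_b\) with \(b\in A\) acts on \(\Sigma^*\) precisely as before, and Lemmas~\ref{L:reducepart1} and~\ref{L:reducepart2} hold verbatim for the enlarged automaton \(\cA_T'\). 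Putting \(f=\sigma_\bot\) and \(g=\sigma_e\), both generators of \(\genp{\cA_T'}\), the first paragraph yields
\[
(\exists n\ge1\colon f^n=g)\iff \mathbb{Z}^2\text{ has no valid Wang tiling for }T.
\]

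Finally, the assignment \(T\mapsto(\cA_T',f,g)\) is plainly computable, so a decision procedure for ``\(\exists n\colon f^n=g\)'' would decide non-tileability of \(\mathbb{Z}^2\), and hence tileability, contradicting Theorem~\ref{T:UTS2}; this gives the corollary. The main obstacle, and the only step needing care, is the middle paragraph: the zero endomorphism must be realized as a bona fide generator in a way that does not interfere with the delicate collapse argument of Lemma~\ref{L:reducepart2}. Adjoining a pure \emph{state} rather than a new letter is what makes this painless, since it leaves \(\Sigma^*\) and the whole action of \(\sigma_\bot\) unchanged while still forcing \(\sigma_e=z\). One should also double-check that no power of \(\sigma_\bot\) coincides with \(z\) in the tileable case, but this is immediate from the explicit formula \(\sigma_\bot^m(w_0)=\bot^m w_m\) of Lemma~\ref{L:reducepart1}.
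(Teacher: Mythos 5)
Your proposal is correct and is essentially identical to the paper's own proof: the paper likewise adjoins a single absorbing state $c$ (your $e$) with $\sigma(c,x)=\bot$, $\delta(c,x)=c$, keeps $\Sigma$ unchanged, and shows $\sigma_\bot^n=\sigma_c$ for some $n$ if and only if $\mathbb{Z}^2$ has no valid tiling, via Lemma~\ref{L:reducepart1} in one direction and the Claim from the proof of Lemma~\ref{L:reducepart2} (plus the fact that $\sigma_\bot^n(w)$ begins with $\bot^n$) in the other. Your write-up even fills in the details the paper leaves implicit, such as why the enlarged automaton does not disturb the action of the old states.
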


\begin{proof}
Given a finite NW-deterministic tile set $T$, we consider the Mealy automaton $\cA_T=(A,\Sigma,\delta,\sigma)$ defined in Definition~\ref{D:MAformDTS}. We add an additional state $c$ to $A$ and extend $\sigma$ and $\delta$ by:
\begin{align*}
\sigma(c,x)=\bot && \text{for all $x\in\Sigma$.}\\
\delta(c,x)=c    && \text{for all $x\in\Sigma$.}
\end{align*}
We obtain a new Mealy automaton. The corresponding automaton semigroup contains a new element $\sigma_c$. Notice that $\sigma_c(w)=\bot^\omega$ for each infinite word $w\in\Sigma^\omega$.

From the Lemma~\ref{L:reducepart1} and Lemma~\ref{L:reducepart2} we see that the following statement are equivalent:
\begin{enumerate}
\item The exists a positive integer $n$ such that $\sigma_\bot^n=\sigma_c$.
\item There is no valid tiling of $\mathbb{Z}^2$ with $T$.
\end{enumerate}
The contrapositive of $(1)\Longrightarrow (2)$ is a direct consequence of Lemma~\ref{L:reducepart1}. Notice that $\sigma_\bot^n(w)$ always start with $n$ times the symbol $\bot$, for each word $w\in\Sigma^\omega$. It follows from the proof of Lemma~\ref{L:reducepart2} that $(2)\Longrightarrow (1)$.

However the tiling problem is undecidable for NW-deterministic tile set (cf. Theorem~\ref{T:UTS2}), therefore $(1)$ is undecidable.
\end{proof}

\end{document}